\newtheorem{lemma}{\textbf{Lemma}}
\newtheorem{proposition}{\textbf{Proposition}}
\newtheorem{remark}{\textbf{Remark}}
\begin{document}
\title{Optimal Caching Placement for D2D Assisted Wireless Caching Networks }
\author{Jun Rao$^{*}$, Hao Feng$^{*}$, Chenchen Yang$^{*}$, Zhiyong Chen$^{*\dag}$, and Bin Xia$^{*}$\\
$^{*}$ Department of Electronic Engineering, Shanghai Jiao Tong University, Shanghai, P. R. China\\
$^{\dag}$ Cooperative Medianet Innovation Center, Shanghai, China\\
Email: {\{sjtu\_jun, fenghao, zhanchifeixiang, zhiyongchen, bxia\}@sjtu.edu.cn}\\
}
\maketitle

\begin{abstract}
In this paper, we devise the optimal caching placement to maximize the offloading probability for a two-tier wireless caching system, where the helpers and a part of users have caching ability. The offloading comes from the local caching, D2D sharing and the helper transmission. In particular, to maximize the offloading probability we reformulate the caching placement problem for users and helpers into a difference of convex (DC) problem which can be effectively solved by DC programming. Moreover, we analyze the two extreme cases where there is only help-tier caching network and only user-tier. Specifically, the placement problem for the helper-tier caching network is reduced to a convex problem, and can be effectively solved by the classical water-filling method. We notice that users and helpers prefer to cache popular contents under low node density and prefer to cache different contents evenly under high node density.  Simulation results indicate a great performance gain of the proposed caching placement over existing approaches.
\end{abstract}
\section{Introduction}
 As more and more different types of smart devices are produced and applied in people's daily life, wireless traffic demand has experienced an unprecedented growth. Cisco's most recent report\cite{cisco} forecasts that the  mobile multimedia data will grow at a compound annual growth rate of more than $60\%$. On the other hand, users demand for multimedia contents is highly redundant, i.e., a few popular contents account for a majority of all requests\cite{Zipf}. Therefore, caching popular contents at various nodes in the network is a promising approach to alleviate the network bottleneck\cite{IWCT}.

For the wireless caching systems where helpers  (WiFi, femtocells) have high storage capacity, the performance  depends heavily on the adopted caching replacement. The caching placement for helpers is firstly investigated in \cite{4femetocaching} to minimize the downloading time, where both uncoded and coded cases are considered.  It is shown
that the optimization problem for the uncoded case is NP-hard. In addition,  \cite{cachingplacement} considers  the channel fading factor and develops the caching placement to minimize the average bit error rate, where the optimal caching placement is to balance between the channel diversity gain and the  caching diversity gain.  Moreover,  the problem of optimal
MDS-encoded caching placement at the wireless edge is investigated in \cite{MDS} to minimize
the backhaul rate in heterogeneous networks. However, all above analyses \cite{4femetocaching,cachingplacement,MDS} are  based on the fixed topology between users and helpers. In  \cite{geographic}, more realistic network models are adopted to characterize the stochastic natures of geographic location and the corresponding optimal caching placements are derived according to the total hit probability.

On the other hand, the potential cache capacity at user side can also be exploited, e.g., local cache offloading or D2D sharing. Various works have been done on the caching placement at user side. In \cite{outage}, the D2D outage-throughput tradeoff problem is investigated and the optimal scaling laws are characterized.  \cite{USC} analyzes the scaling
behavior of the throughput with the number of devices per cell under Zipf distributed content request probability with exponent $\gamma_r$, and concludes that  the optimal cache distribution is also a Zipf distribution with a different  value $\gamma_c$.  By modeling the mobile devices as a homogeneous Poisson Point Processes (PPP), \cite{malakGlobecom} derives the optimal cache distribution resulting in the total probability of content delivery is maximized. However, the local offloading probability is not considered in their analysis. In addition, coded caching is also an effective approach to exploit the content diversity \cite{codedcache1}. By caching contents partially at user side according to the developed caching distribution during the first phase, a coded multicasting opportunity can be created even for different content requests in the second phase. Moreover, \cite{Hierarchical} further proposes the hierarchical coded caching
to address the joint caching placement problem at both users and helpers. However, these  analyses \cite{codedcache1,codedcache2,Hierarchical} are  based on the fixed topology which is not suitable for the user mobility scenario.

Despite the aforementioned studies, to the best of our knowledge, the optimal caching placements for both helpers and users under realistic network models  remain unsolved to date. Thus in this paper, we consider a two-tier caching system, where the helpers and users are spatially distributed according to two mutually independent homogeneous Poisson Point Processes (PPPs) with different densities \cite{YangCC}. In order to alleviate the traffic load in the cellular network, we aim to develop an optimal caching placement scheme to maximize the offloading probability, where the offloading includes self-offloading, D2D-offloading and helper-offloading. More details along with the main contributions are as follows:

\begin{itemize}
  \item
 We consider a D2D assisted two-tier wireless caching network consisting of users and helpers where the offloading comes from self-offloading, D2D-offloading and helper-offloading. Different from \cite{malakGlobecom}, we take self-offloading events into consideration. Moreover, the practical assumption that only a part of users has caching ability is considered.
  \item
  We formulate the total offloading probability of caching placement in the two-tier wireless networks and adopt the DC programming to solve the non-convex maximization problem. In addition, we notice that users and helpers ought to cache the popular contents while the density is low and ought to cache different contents while the density is high. And our proposed caching placement can achieve a balance between them.
  \item
  The two extreme cases for one-tier caching systems are considered. In absence of user caching ability, we formulate the  caching placement for helper-tier as a convex problem, and can be effectively solved by the classical waterfilling method; In absence of helper caching ability, the  caching placement for users is also formulated into a convex problem. Furthermore, we combine the solutions of the two cases as a non-joint optimal caching placement and compare it with the proposed placement.
\end{itemize}
\section{System model and content access protocol}
In this section, we first introduce  the  two-tiered caching  system as illustrated in Fig. \ref{fig:system}, where the helpers and users are spatially distributed according to two mutually independent homogeneous Poisson  Point Processes (PPPs) with density $\lambda_{\text{H}}$ and $\lambda_{\text{UE}}$, respectively. Then  the content access protocol is provided.
\subsection{System Model}
\subsubsection{Content module}
The content library consists of $N$ contents. The popularity distribution vector of the contents is denoted by $\mathbf{q}=\{q_1,\ldots,q_N\}$, where $q_i$ is the access probability for the $i$-th content. In this paper, we characterize the popularity distribution as a Zipf distribution with parameter $\gamma$\cite{Zipf}.  If we arrange contents in descending order of popularity, the popularity of the $i$-th ranked content is \cite{8Push-Based}
\begin{equation}
q_i={\frac{1/i^{\gamma}}{\sum_{j=1}^{N}1/j^{\gamma}}},
\end{equation}
where $\gamma$ governs the skewness of the popularity. The popularity is uniform over contents for $\gamma=0$, and becomes more skewed as $\gamma$ grows. For simplicity, we assume all the $N$ contents are of equal size $L$.
\subsubsection{Network module}
 In addition to the macro base stations (BSs), the network module also consists of the helpers with caching ability, where  helpers could successfully send  the contents in its local cache to  requesting users within radius $R_{\text{H}}$ at relatively low cost.   For simplicity,  we assume the caching capacity for all helpers are the same, denoted by  $M_{\text{H}}L$, where $M_{\text{H}}<N$. Therefore, the helper can cache up to $M_{\text{H}}$ different contents entirely. Also we assume a content can  only be cached entirely rather than partially. Denote the caching placement at the helpers for each content as $\mathbf{P_{\text{H}}}=[p_{1}^{\text{H}},p_{2}^{\text{H}},\ldots,p_{N}^{\text{H}}]$, where $p_i^{\text{H}}$  is  the proportion of helpers caching the $i$-th content and  $0 \leq p_i^{\text{H}}\leq 1$ for $i=1,2,\ldots,N$.  The cache storage constraint at the helpers can then be  written as  $\sum\limits_{i=1}^{N}p_i^{\text{H}}\leq M_{\text{H}}$. The helpers caching the $i$-th content also follow a PPP with density $\lambda_{\text{H}}p_{i}^{\text{H}}$.
\subsubsection{User module}
We assume part of the users having caching ability. Let $\alpha$ denote the proportion of cache-enabled users, where $0\leq \alpha \leq 1$.  The cache-enabled users  also follow a thinning homogeneous  PPP with density $\alpha\lambda_{\text{UE}}$. For simplicity,  we  assume the caching capacity for the cache-enabled users are the same, denoted by  $M_{\text{UE}}L$. Therefore, cache-enabled users can cache up to $M_{\text{UE}}$ different contents entirely in its local cache. Moreover, a device to device (D2D) communication can be established if the distance between the requesting user and the user caching the desired content is less than $R_{\text{UE}}$, where $R_{\text{UE}}<R_{\text{H}}$ due to the transmitting power. Let $\mathbf{P_{\text{UE}}}=[p_{1}^{\text{UE}},p_{2}^{\text{UE}},\ldots,p_{N}^{\text{UE}}]$ denote the caching placement at the cache-enabled users for each content, where $p_i^{\text{UE}}$  is  the proportion of users caching the $i$-th content, and  $0 \leq p_i^{\text{UE}}\leq 1$ for $i=1,2,\ldots,N$.  The cache storage constraint at the cache-enabled users  can then be  written as  $\sum\limits_{i=1}^{N}p_i^{\text{UE}}\leq M_{\text{UE}}$. Therefore, the users caching the $i$-th content also follows a PPP with density $\alpha\lambda_{\text{UE}}p_{i}^{\text{UE}}$.

\begin{figure}[t]
\centering
\includegraphics[height=2.2in]{./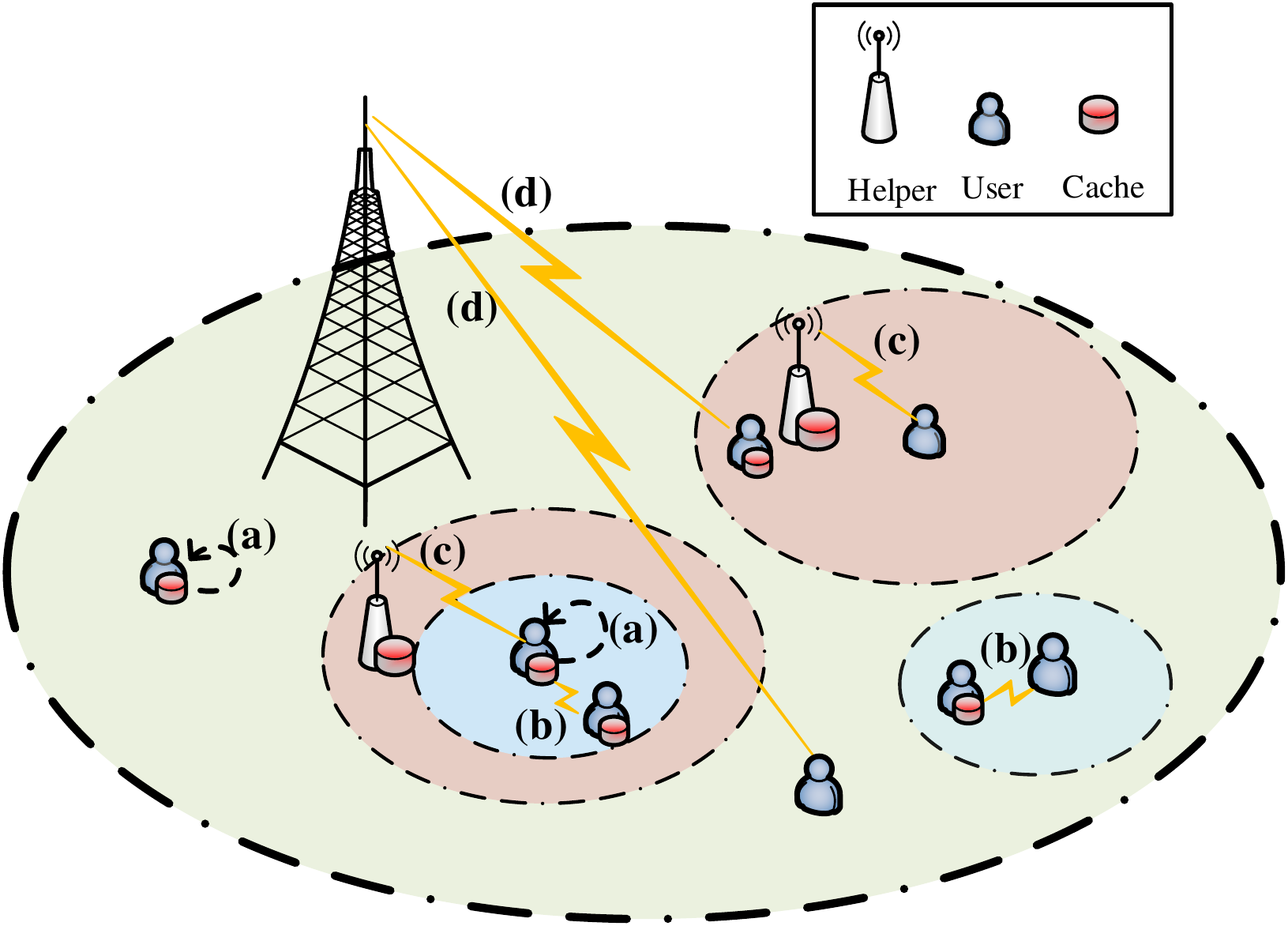}
\vspace{-0.3cm}
\caption{System model of the D2D assisted wireless caching system, where (a), (b), (c) and (d) stand for Self-offload, D2D-offload, Helper-offload and Celluar-response, respectively.}
\vspace{-0.2cm}
\label{fig:system}
\end{figure}

\subsection{Content Access Protocol}
As indicated in Fig. \ref{fig:system}, the content access protocol is as follows:
 \begin{enumerate}[(a)]
  \item
\textbf{Self-offloading}:
When a content request occurs, the user first checks its local storage whether the desired content has been stored in it. The request will be satisfied and offloaded immediately if the user has cached the desired content in its local storage space. We term it as ``Self-offloading".
\item
\textbf{D2D-offloading}:
If the exact content has not been cached in the local storage or the requesting user does not have cache ability, the user will turn to search near devices for the desired content. If there is at least one users have stored the desired content within the radius $R_{\text{UE}}$. The request would be met and offloaded by establishing a D2D communication, termed as ``D2D-offloading".
\item
\textbf{Helper-offloading}:
In addition to ``D2D-offloading", if there is at least one helper have stored the desired content within $R_{\text{H}}$,  the request would be satisfied and offloaded by the helper transmission, termed as ``Helper-offloading".
\item
\textbf{Cellular-response}:
If the request could not be offloaded via local cache, D2D or the helpers\, then it need to be forwarded to the cellular base station and the cellular network transmits the requested content in response.
\end{enumerate}
\section{offloading probability and Problem formulation}
In this paper, in order to alleviate the traffic load from the cellular network, our goal is to find the optimal caching placement to maximize the offloading probability. Therefore, we first analyze the offloading probability for the D2D assisted wireless caching network. Then, the optimal caching placement problem is formulated.
\subsection{Offloading probability analysis}
 For a PPP distribution with density $\lambda$, the probability that there are $n$ devices in the area within a radius $r$ is:

 \begin{equation}
 \mathbb{F}(n,r,\lambda)= \frac{(\pi r^2\lambda)^n}{n!} e^{-\pi r^2 \lambda}
 \end{equation}

Therefore,  for a reference user located at the origin, the probability of at least another user caching the $i$-th content within the transmission range is
\begin{align}
P_{i,\text{off}}^{\text{D2D}}=1-\mathbb{F}(0,R_{\text{UE}},\alpha \lambda_{\text{UE}} p_{i}^{\text{UE}})=1-e^{-\pi\alpha \lambda_{\text{UE}} p_{i}^{\text{UE}} {R_{\text{UE}}}^2}.
\end{align}
Similarly, the probability of at least one helper caching the $i$-th content within the radius $R_{\text{H}}$ is
\begin{equation}
P_{i,\text{off}}^{\text{H}}=1-\mathbb{F}(0,R_{\text{H}},\lambda_{\text{H}} p_{i}^{\text{H}})=1-e^{-\pi\lambda_{\text{H}} p_{i}^{\text{H}} {R_{\text{H}}}^2}.
\end{equation}
The offloading probability for the $i-$th content of cache-unabled users, i.e the probability at least one helper or one user caching the $i$-th content is:
\begin{equation}
\begin{split}
P_{i,\text{NC}}=&1-(1-P_{i,\text{off}}^{\text{D2D}})(1-P_{i,\text{off}}^{\text{H}})\\
=&1-e^{-(\pi\alpha \lambda_{\text{UE}} p_{i}^{\text{UE}} {R_{\text{UE}}}^2+\pi\lambda_{\text{H}} p_{i}^{\text{H}} {R_{\text{H}}}^2)}.
\end{split}
\end{equation}
The corresponding offloading probability of the cache-enabled users for the $i$-th content is
\begin{equation}
P_{i,\text{C}}=p_{i}^{\text{UE}}+(1-p_{i}^{\text{UE}})P_{i,\text{NC}}.
\end{equation}
Therefore, the offloading probability for the $i$-th content becomes
\begin{equation}
\begin{split}
 P_{i,\text{off}}=&\alpha{P_{i,\text{C}}}+(1-\alpha )P_{i,\text{NC}}\\
=&1-(1-\alpha{p_{i}^{\text{UE}}})e^{-(\pi\alpha \lambda_{\text{UE}} p_{i}^{\text{UE}} {R_{\text{UE}}}^2+\pi\lambda_{\text{H}} p_{i}^{\text{H}} {R_{\text{H}}}^2)}.
\end{split}
\end{equation}

The total offloading probability for the D2D assisted wireless caching system becomes
\begin{equation}
 P_{\text{off}}=\sum_{i=1}^{N} q_{i}P_{i,\text{off}},
\end{equation}
while more data offloaded by the wireless caching network, less data needs to be sent via the cellular network, alleviating the traffic load for the cellular network.
\subsection{Problem Formulation}
Let $\mathbf{P}=[\mathbf{P_{\text{H}}}\quad \mathbf{P_{\text{UE}}}]$ denotes the caching placement at helper and user sides.
The optimal caching placement that maximizes the offloading probability for the wireless caching network can be formulated as
\begin{align}
    & \max_{\mathbf{P}} ~~\sum_{i=1}^{N}q_{i}P_{i,\text{off}}\\
    & \textup{s.t.}~~\begin{cases}\sum\limits_{i=1}^{N} {p_{i}^{\text{UE}}}\leq M_{\text{UE}}\\
  \sum\limits_{i=1}^{N} {p_{i}^{\text{H}}}\leq M_{\text{H}}\\
   0 \leq p_i^{\text{UE}}\leq 1,  i\in\{1,\ldots,N\}\\
  0 \leq p_i^{\text{H}}\leq 1, i\in\{1,\ldots,N\}\\
    \end{cases}.
\end{align}
\section{DC Programming for Caching Placement Optimization}\label{sec:dc}

In this section, we adopt the difference of convex (DC) program to solve the above problem. The maximization problem is equivalent to the following minimization problem:
\begin{align}\label{pro:equvalent_min}
    & \min_{\mathbf{P}} ~~-\sum_{i=1}^{N}q_{i}P_{i,\text{off}}\\
    & \textup{s.t.}~~\begin{cases}\label{pro:st}\sum\limits_{i=1}^{N} {p_{i}^{\text{UE}}}\leq M_{\text{UE}}\\
  \sum\limits_{i=1}^{N} {p_{i}^{\text{H}}}\leq M_{\text{H}}\\
   0 \leq p_i^{\text{UE}}\leq 1,  i\in\{1,\ldots,N\}\\
  0 \leq p_i^{\text{H}}\leq 1, i\in\{1,\ldots,N\}\\
    \end{cases}.
\end{align}

Let $F(\mathbf{P})=-\sum_{i=1}^{N}q_{i}P_{i,\text{off}}$ denote the objective function in problem (\ref{pro:equvalent_min}) and it can be easily verified that the hessian matrix of $F(\mathbf{P})$ is not positive definite and hence $F(\mathbf{P})$ is non-convex.

Let $H(\mathbf{P})=\sum_{i=1}^{n}q_ih_i$, where  $h_i=\alpha\pi\lambda_{\text{H}}{R_{\text{H}}}^2({p_i^{\text{UE}}}^2+{p_i^{\text{H}}}^2)$. Denote $G(\mathbf{P})=F(\mathbf{P})+H(\mathbf{P})$, we then have the following proposition.
 \begin{proposition}
    $H(\mathbf{P})$ and $G(\mathbf{P})$ are both convex of $\mathbf{P}$.
 \end{proposition}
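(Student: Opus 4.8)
The plan is to exploit the fact that both functions are separable, first across the content index $i$ and then across the two coordinates $(p_i^{\text{UE}},p_i^{\text{H}})$ attached to each content. Convexity of $H$ is immediate: $H(\mathbf{P})=\alpha\pi\lambda_{\text{H}}R_{\text{H}}^2\sum_{i=1}^{N}q_i\big((p_i^{\text{UE}})^2+(p_i^{\text{H}})^2\big)$ is a nonnegative linear combination of the convex coordinate squares, hence convex. For $G=F+H$, I would drop the additive constant $-\sum_i q_i$ hidden in $F$ and write, with the shorthand $a:=\pi\alpha\lambda_{\text{UE}}R_{\text{UE}}^2$ and $b:=\pi\lambda_{\text{H}}R_{\text{H}}^2$,
\[
G(\mathbf{P})+\sum_{i=1}^{N}q_i=\sum_{i=1}^{N}q_i\,\phi_i\!\left(p_i^{\text{UE}},p_i^{\text{H}}\right),\qquad
\phi_i(x,y)=(1-\alpha x)e^{-(ax+by)}+\alpha b\,(x^2+y^2).
\]
Each $\phi_i$ involves only the pair $(p_i^{\text{UE}},p_i^{\text{H}})$ and the weights $q_i$ are positive, so it is enough to show that every $\phi_i$ is convex on the feasible box $[0,1]^2$, i.e.\ that its $2\times2$ Hessian is positive semidefinite there.

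The next step is a direct differentiation. Writing $E:=e^{-(ax+by)}\in(0,1]$ and $s:=1-\alpha x\in[0,1]$ on the box, one obtains
\[
\partial_{xx}\phi_i=aE(2\alpha+as)+2\alpha b,\qquad
\partial_{yy}\phi_i=b^2sE+2\alpha b,\qquad
\partial_{xy}\phi_i=bE(\alpha+as).
\]
Both diagonal entries are manifestly nonnegative, so the Hessian is PSD precisely when $\partial_{xx}\phi_i\,\partial_{yy}\phi_i-(\partial_{xy}\phi_i)^2\ge0$. Expanding this product, the terms carrying $E^2$ collapse via the algebraic identity $as(2\alpha+as)-(\alpha+as)^2=-\alpha^2$, leaving
\[
\partial_{xx}\phi_i\,\partial_{yy}\phi_i-(\partial_{xy}\phi_i)^2=-\alpha^2b^2E^2+2\alpha ab\,E(2\alpha+as)+2\alpha b^3 sE+4\alpha^2b^2.
\]
Since the two middle terms are nonnegative on the box and $E\le1$, this is at least $4\alpha^2b^2-\alpha^2b^2E^2\ge3\alpha^2b^2\ge0$; hence the Hessian is PSD, each $\phi_i$ is convex, and therefore $G$ is convex.

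The crux — and the only part that is not bookkeeping — is that determinant inequality. The Hessian of $F$ alone is \emph{not} PSD, precisely because of the $-\alpha^2b^2E^2$ contribution; the quadratic correction $H$ is engineered so that its diagonal term $2\alpha b$ generates the constant $4\alpha^2b^2$ that dominates $\alpha^2b^2E^2$ for $E\le1$. I would therefore take particular care to (i) verify the identity $as(2\alpha+as)-(\alpha+as)^2=-\alpha^2$, and (ii) confirm that all non-$E^2$ cross terms are genuinely nonnegative on $[0,1]^2$ (which uses $s\ge0$ there); these are exactly the facts that make the coefficient $\alpha\pi\lambda_{\text{H}}R_{\text{H}}^2$ in $h_i$ the ``right'' choice, and incidentally they yield the DC decomposition $F=G-H$ that the rest of Section \ref{sec:dc} relies on. No further subtlety is anticipated.
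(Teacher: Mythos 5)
Your proof is correct, and for $H$ it coincides with the paper's argument (a nonnegative combination of coordinate squares / a diagonal positive Hessian). The interesting comparison is on $G$: the paper's own proof stops at "similarly, the Hessian of $G$ is definite," offering no computation, whereas your determinant calculation is exactly the missing content. Your reduction to the $2\times2$ blocks $\nabla^2\phi_i$ is legitimate because both $F$ and $H$ are separable across $i$, so the full Hessian of $G$ is block-diagonal; your second derivatives check out, the identity $as(2\alpha+as)-(\alpha+as)^2=-\alpha^2$ is right, and the bound
\[
\det\nabla^2\phi_i \;\ge\; 4\alpha^2b^2-\alpha^2b^2E^2 \;\ge\; 3\alpha^2b^2\;\ge\;0
\]
on $[0,1]^2$ (where $s=1-\alpha x\ge 0$ and $E\le 1$) together with nonnegative diagonal entries gives positive semidefiniteness of each block, hence convexity of $G$ on the feasible set. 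Two small remarks: your argument shows the Hessian is PSD but not positive \emph{definite} (e.g.\ $\det=0$ when $\alpha=0$), so the paper's wording "definite" is the imprecise one, not yours; and your restriction to the box $[0,1]^2$ is the honest domain statement --- convexity need not hold for $x>1/\alpha$, a point the paper glosses over but which is immaterial since the constraint set lies in the box. Your observation that the coefficient $\alpha b=\alpha\pi\lambda_{\text{H}}R_{\text{H}}^2$ in $h_i$ is precisely what is needed to dominate the $-\alpha^2b^2E^2$ term is a genuine insight into why this particular DC decomposition works.
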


\begin{proof}
Let $A_i$ denote the hessian matrix of $ h_{i}$

$A_i=\begin{bmatrix}
    \frac{\partial^{2}{h_{i}}}{\partial {(p_i^{\text{UE}})}^{2}} & \frac{\partial^{2}{h_{i}}}{\partial {p_i^{\text{UE}}}\partial {p_i^{\text{H}}}}\\
\frac{\partial^{2}{h_{i}}}{\partial {p_i^{\text{H}}}\partial {p_i^{\text{UE}}}}&\frac{\partial^{2}{h_{i}}}{\partial {(p_i^{\text{H}})}^{2}}
  \end{bmatrix}=\begin{bmatrix}
    2\alpha\pi\lambda_{\text{H}}{R_{\text{H}}}^2& 0 \\
    0 & 2\alpha\pi\lambda_{\text{H}}{R_{\text{H}}}^2
  \end{bmatrix}$.

Hence the matrix $A_i$ is positive definite and $h_i$ is convex. Since the linear combination of convex functions is also convex,  $H(\mathbf{P})$ is  convex. Similarly, we have the hessian matrix of $G(\mathbf{P})$ is definite and $G(\mathbf{P})$  convex of $\mathbf{P}$.
\end{proof}

 Hence, $ F(\mathbf{P})$ can be written as a difference of the following two convex functions:
 \begin{equation}
 F(\mathbf{P})=G(\mathbf{P})-H(\mathbf{P}).
 \end{equation}

Therefore, we adopt the DC programming to solve this problem. DC programming is a quick convergence programming which can obtain a partial optimal solution and sometimes the global optimal solution of a non-convex function\cite{dc}. Since  $\frac{\partial{H(\mathbf{P})}}{\partial {\mathbf{P}}}$ is continuous and the constraint of problem (\ref{pro:equvalent_min}) is a convex set, the DC programming can be simply described in Algorithm $1$. The result will be illustrated in section \ref{sec:result}.

\begin{algorithm}\label{dc}
    \caption{DC programming for  caching placement}
    \begin{algorithmic}[1]
           \STATE   initial value: $\mathbf{P}_{0}^{\text{UE}}=\frac{M_\text{UE}}{\text{N}},
                         \mathbf{P}_{0}^{\text{H}}=\frac{M_\text{H}}{\text{N}}$;\\
           \STATE  solve the convex optimization problem:

               $\min G(\mathbf{P})-H(\mathbf{P}_{k})-(\mathbf{P}
                -\mathbf{P}_{k}) \frac{\partial{H(\mathbf{P}_{k})}}{\partial {\mathbf{P}}}~~\textup{s.t.}(\ref{pro:st})$;\\
        \STATE     the solution of step $2$ is $\mathbf P_{{k+1}}$;\\
      \STATE    if $ ||F(\mathbf{P}_{k})-F(\mathbf{P}_{{k+1}})|| \leq \varepsilon$ or $||\mathbf P_{k}-\mathbf P_{{k+1}}||\leq \varepsilon$,$\mathbf P_{k}$ is the optimal solution of $F(\mathbf{P})$; otherwise,return to step$2$ ;\\
      \STATE    RETURN:the result is:$F(\mathbf{P}_{k})$,the solution is: $\mathbf{P}_{k}$;
    \end{algorithmic}
\end{algorithm}

\section{extreme case analyses}
In this section, we consider the  caching problem under extreme cases where only one tier of the caching system is considered and the optimal caching placement can be calculated. We analyze the caching placement of the two extreme cases and combine the solutions as a baseline.
\subsection{${\alpha}=0$: \textbf{helper-tier caching network}}
\subsubsection{problem formulation}
In this case, all  users have no caching ability and we only need to optimize the caching placement $\mathbf{P_{H}}$ at helper side. The  offloading probability for the $i$-th content is reduced to
\begin{equation}
P_{i,\text{off}}=P_{i,\text{off}}^{\text{H}}=1-e^{-\pi\lambda_{\text{H}} p_{i}^{\text{H}} {R_{\text{H}}}^2}
\end{equation}
Problem  (\ref{pro:equvalent_min}) can be written as
\begin{align}\label{case1}
    & \min_{\mathbf{P_{\text{H}}}} ~~-\sum_{i=1}^{N}q_{i}(1-e^{-\pi\lambda_{\text{H}} p_{i}^{\text{H}} {R_{\text{H}}}^2})\\
    & \textup{s.t.}~~\begin{cases}  \sum\limits_{i=1}^{N} {p_{i}^{\text{H}}}\leq M_{\text{H}}\\
  0 \leq p_i^{\text{H}}\leq 1, i\in\{1,\ldots,N\}\\
    \end{cases},
\end{align}
\begin{lemma} (\textbf{Water-filling method}) The optimal caching placement of the helpers is
\begin{equation}
p_{i}^{\text{H}}=\min\left((\beta+\frac{\ln {q_{i}}}{\pi \lambda_{\text{H}} {R_{\text{H}}}^2})^+,1\right)
\end{equation}
for $i=1,2,\ldots,N$, where  $x^+=\max{(x,0)}$ and $\beta$  is effectively solved by the bisection search with $\sum\limits_{i=1}^{N} {p_{i}^{\text{H}}}=M_{\text{H}}$.
\end{lemma}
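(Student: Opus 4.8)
The plan is to recognize problem (\ref{case1}) as a convex program and to read off its solution from the Karush--Kuhn--Tucker (KKT) conditions. First I would check convexity directly: writing $c=\pi\lambda_{\text{H}}R_{\text{H}}^2>0$, each summand $q_i\bigl(1-e^{-c\,p_i^{\text{H}}}\bigr)$ has second derivative $-q_i c^2 e^{-c\,p_i^{\text{H}}}<0$, so it is strictly concave in $p_i^{\text{H}}$; hence $-\sum_i q_i(1-e^{-c\,p_i^{\text{H}}})$ is strictly convex, and the feasible set (a box intersected with a half-space) is convex. The point $p_i^{\text{H}}=M_{\text{H}}/(2N)$ lies strictly inside every inequality constraint since $0<M_{\text{H}}<N$ and $\sum_i p_i^{\text{H}}=M_{\text{H}}/2<M_{\text{H}}$, so Slater's condition holds and the KKT conditions are necessary and sufficient for (the unique) global optimum.

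Next I would introduce a multiplier $\eta\ge 0$ for $\sum_i p_i^{\text{H}}\le M_{\text{H}}$ and multipliers $\mu_i,\nu_i\ge 0$ for $p_i^{\text{H}}\ge 0$ and $p_i^{\text{H}}\le 1$, and write stationarity $-q_i c\,e^{-c\,p_i^{\text{H}}}+\eta-\mu_i+\nu_i=0$ together with complementary slackness. A short argument shows $\eta>0$: if $\eta=0$, stationarity forbids any interior $p_i^{\text{H}}$ and forces $p_i^{\text{H}}=1$ for all $i$ (the case $p_i^{\text{H}}=0$ would need $\mu_i=-q_ic<0$), contradicting $M_{\text{H}}<N$. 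Then I would split into three cases per coordinate: (i) if $0<p_i^{\text{H}}<1$ then $\mu_i=\nu_i=0$ and solving gives $p_i^{\text{H}}=\frac1c\ln\!\bigl(q_i c/\eta\bigr)=\tfrac{\ln q_i}{c}+\tfrac1c\ln(c/\eta)$; (ii) if $p_i^{\text{H}}=0$ then $\mu_i=\eta-q_ic\ge 0$; (iii) if $p_i^{\text{H}}=1$ then $\nu_i=q_ice^{-c}-\eta\ge 0$. Introducing the single ``water level'' $\beta:=\tfrac1c\ln(c/\eta)$, case (i) reads $p_i^{\text{H}}=\beta+\tfrac{\ln q_i}{c}$; taking logarithms in the inequalities of (ii) and (iii) turns them into $\beta+\tfrac{\ln q_i}{c}\le 0$ and $\beta+\tfrac{\ln q_i}{c}\ge 1$, respectively. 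These three branches are exactly $p_i^{\text{H}}=\min\bigl((\beta+\tfrac{\ln q_i}{\pi\lambda_{\text{H}}R_{\text{H}}^2})^+,\,1\bigr)$, the claimed formula.

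Finally I would pin down $\beta$. Since $\eta>0$, complementary slackness forces the budget to be tight, $\sum_{i=1}^N p_i^{\text{H}}=M_{\text{H}}$, i.e. $\sum_{i=1}^N\min\bigl((\beta+\tfrac{\ln q_i}{\pi\lambda_{\text{H}}R_{\text{H}}^2})^+,1\bigr)=M_{\text{H}}$. The left-hand side is a continuous, nondecreasing, piecewise-linear function of $\beta$ that tends to $0$ as $\beta\to-\infty$ and to $N$ as $\beta\to+\infty$; because $0<M_{\text{H}}<N$ it attains the value $M_{\text{H}}$, and that crossing point can be located by bisection, which is exactly the stated procedure. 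The only place calling for care is the bookkeeping in the reparametrization $\eta\mapsto\beta$ --- making sure the threshold inequalities in cases (ii) and (iii) line up with the $(\cdot)^+$ truncation at $0$ and the $\min(\cdot,1)$ truncation at $1$ in the right direction --- but this is a routine logarithm manipulation rather than a genuine obstacle.
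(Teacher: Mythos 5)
Your proposal is correct and follows essentially the same route as the paper: establish convexity via the second derivative, write the KKT conditions, and read off the water-filling solution with $\beta$ found by bisection on the tight budget constraint. Your version is in fact somewhat more careful than the paper's (explicit box-constraint multipliers, Slater's condition, the argument that $\eta>0$ forces $\sum_i p_i^{\text{H}}=M_{\text{H}}$, and the monotonicity justifying bisection), but the underlying argument is the same.
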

\begin{proof}
The second derivative of $P_{i,\text{off}}$ is
\begin{align}
\frac{\partial^2{P_{i,\text{off}}}}{\partial{{p_{i}^{\text{H}} }^2}}=-\pi^2\lambda_{\text{H}}^2{R_{H}^{4}}e^{-\pi\lambda_{\text{H}} p_{i}^{\text{H}} {R_{\text{H}}}^2}< 0,
 \end{align}
 thus  $-P_{i,\text{off}}$ is convex in $p_{i}^{\text{H}}$ and the objective function $-\sum\limits_{i=1}^{N}q_{i}P_{i,\text{off}}$ is also convex. Therefore, the caching placement optimization  problem  is  a convex problem.  Consider the following Lagrangian
\begin{align}
{\cal{L}}= -\sum_{i=1}^{N}q_{i}(1-e^{-\pi\lambda_{\text{H}} p_{i}^{\text{H}} {R_{\text{H}}}^2})+\mu(\sum\limits_{i=1}^{N} {p_{i}^{\text{H}}}- M_{\text{H}})
\end{align}
where $\mu$ is the Lagrange multiplier. The KKT condition for the optimality of a caching placement is
 \begin{equation}
\frac{\partial{\cal{L}}}{\partial{p_{i}^{\text{H}}}}=-\pi\lambda_{\text{H}}{R_{\text{H}}}^2q_{i}e^{-\pi\lambda_{\text{H}} p_{i}^{\text{H}} {R_{\text{H}}}^2}+\mu\begin{cases}=0 \quad \text{if } 0<p_i^{\text{H}}<1\\
\geq 0 \quad \text{if } p_i^{\text{H}}=0\\
\leq 0 \quad \text{if } p_i^{\text{H}}=1
\end{cases}.
\end{equation}
Let $\beta=\frac{\ln {(\pi \lambda_{\text{H}}{R_{\text{H}}}^2)}-\ln \mu }{\pi \lambda_{\text{H}} {R_{\text{H}}}^2}$ and $x^+=\max{(x,0)}$, we then have
\begin{equation}
p_{i}^{\text{H}}=\min\left((\beta+\frac{\ln {q_{i}}}{\pi \lambda_{\text{H}} {R_{\text{H}}}^2})^+,1\right),
\end{equation}
where $\beta$ can be solved by  the bisection search method under the cache storage constraint.
\end{proof}
\begin{figure}[t]
\centering
\includegraphics[width=3.2in]{./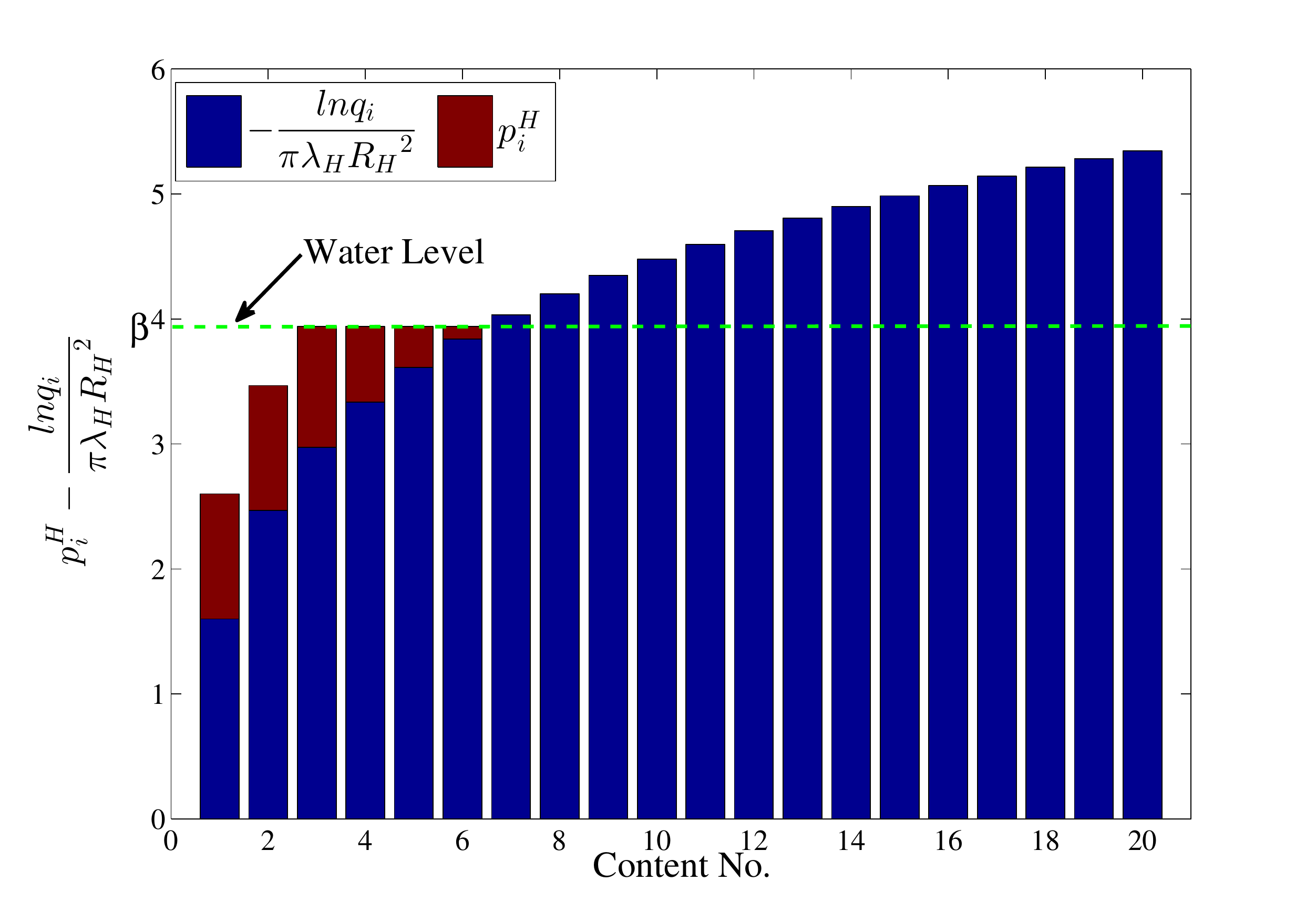}
\caption{Optimal caching placement at helper side under the settings $N=20,
\gamma=1,M_{\text{H}}=4,\lambda_{\text{H}}=\frac{20}{\pi500^2}$.}
\label{fig:waterfilling}
\end{figure}

As illustrated in Fig. \ref{fig:waterfilling}, the water-filling method  allocate more cache probability to contents with larger popularity.  For instance, the contents with larger popularities under the water level, i.e., the first content and the second contents have been cached in all  helpers. While the contents with smaller popularity above the water level, i.e., the $7$-th content to the last content,  have not been cached in any helper.

\begin{remark}
According to  \textbf{Lemma 1},  it is straightforward that the most popular contents are cached in helper storage under relatively low helper density i.e., $p_{1}^{\textbf{H}}=\ldots=p_{M_\textbf{H}}^{\textbf{H}}=1$ and  $p_{M_\textbf{H}+1}^{\textbf{H}}=\ldots=p_{N}^{\textbf{H}}=0$.  While under relatively high density, contents are evenly cached at helper side, i.e, $p_{1}^{\textbf{H}}=\ldots=p_{N}^{\textbf{H}}=\frac{M_{\text{H}}}{N}$.
\end{remark}
\subsection{$\lambda_{\text{H}}M_{\text{H}}=0$: \textbf{user-tier  caching network}}
In this case, no helpers with caching ability participate in offloading the user requests,  the optimization problem is reduced to the optimal caching placement of $p_{i}^{\text{UE}}$. We thus rewritten the function of offloading probability as:
\begin{equation}
P_{i,\text{off}}=1-(1-\alpha{p_{i}^{\text{UE}}})e^{-\pi\alpha \lambda_{\text{UE}} p_{i}^{\text{UE}} {R_{\text{UE}}}^2}.
\end{equation}
Then problem (\ref{pro:equvalent_min}) becomes
\begin{align}\label{case2}
    & \min_{\mathbf{P_{\text{UE}}}} ~~-\sum_{i=1}^{N}q_{i}(1-(1-\alpha{p_{i}^{\text{UE}}})e^{-\pi\alpha \lambda_{\text{UE}} p_{i}^{\text{UE}} {R_{\text{UE}}}^2})\\
    & \textup{s.t.}~~\begin{cases}  \sum\limits_{i=1}^{N} {p_{i}^{\text{UE}}}\leq M_{\text{UE}}\\
  0 \leq p_i^{\text{UE}}\leq 1, i\in\{1,\ldots,N\}\\
    \end{cases},
\end{align}
\begin{proposition}
The above problem is also a convex problem.
\end{proposition}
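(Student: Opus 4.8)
The plan is to exploit the separable, coordinatewise structure of the objective. The feasible region in (\ref{case2}) is the intersection of finitely many half-spaces (the budget constraint $\sum_i p_i^{\text{UE}}\le M_{\text{UE}}$ and the box constraints $0\le p_i^{\text{UE}}\le 1$), hence a convex set, so it remains only to show that the objective $-\sum_{i=1}^N q_i P_{i,\text{off}}$ is convex in $\mathbf{P_{\text{UE}}}$. Since each summand depends on the single coordinate $p_i^{\text{UE}}$, the Hessian is diagonal and it suffices to prove that the scalar function $\phi_i(p):=-q_i\bigl(1-(1-\alpha p)e^{-\pi\alpha\lambda_{\text{UE}}{R_{\text{UE}}}^2 p}\bigr)$ is convex on $[0,1]$; then the full objective, as a sum of coordinatewise convex functions, is convex, and a convex minimization over a convex set is a convex program.

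First I would discard the additive constant $q_i$ and the positive factor $q_i$, reducing the claim to convexity of $g(p):=(1-\alpha p)e^{-cp}$ on $[0,1]$, where $c:=\pi\alpha\lambda_{\text{UE}}{R_{\text{UE}}}^2\ge 0$. Differentiating twice gives $g''(p)=c\,e^{-cp}\bigl(2\alpha+c-c\alpha p\bigr)$, so the sign of $g''$ is controlled by the affine factor $2\alpha+c-c\alpha p$. On the feasible interval $p\in[0,1]$, and using $\alpha\in[0,1]$, one has $2\alpha+c-c\alpha p\ge 2\alpha+c-c\alpha = 2\alpha+c(1-\alpha)\ge 0$, hence $g''(p)\ge 0$ and $\phi_i$ is convex. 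Summing over $i$ and invoking convexity of the polyhedral constraint set then finishes the proof, in exact analogy with the argument already used for Lemma 1 and Proposition 1.

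The only genuine subtlety, and the step I expect to require care, is precisely this last sign check: the factor $(1-\alpha p)$ appearing in $P_{i,\text{off}}$ would be negative for unconstrained $p$, and indeed when $c>0$ the bracket $2\alpha+c-c\alpha p$ vanishes at $p=\tfrac{2}{c}+\tfrac1\alpha>1$, so $g''$ does change sign outside $[0,1]$. Thus convexity genuinely relies on the box constraint $p_i^{\text{UE}}\le 1$ together with $\alpha\le 1$, and cannot be obtained by a generic composition rule; everything else (separability, diagonality of the Hessian, convexity of the feasible set) is routine.
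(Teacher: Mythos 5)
Your proof is correct and follows essentially the same route as the paper: compute the second derivative of each separable summand (your $c\,e^{-cp}(2\alpha+c-c\alpha p)$ is exactly the paper's $[2\alpha b+b^2(1-\alpha p_i^{\text{UE}})]e^{-bp_i^{\text{UE}}}$ with $c=b$) and conclude coordinatewise convexity over the polyhedral feasible set. Your explicit observation that the sign of the bracket genuinely relies on $p\le 1$ and $\alpha\le 1$ is a worthwhile refinement the paper glosses over, but it does not change the argument.
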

\begin{proof}
The second derivative of the objective function becomes
\begin{align}
\frac{\partial^2{P_{i,\text{off}}}}{\partial{{p_{i}^{\text{UE}} }^2}}=-[2\alpha{b}+b^2(1-\alpha{p_{i}^{\text{UE}}})]e^{-bp_{i}^{\text{UE}}}< 0,
 \end{align}
 where $b=\pi\alpha \lambda_{\text{UE}}{R_{\text{UE}}}^2$. Therefore,  $-P_{i,\text{off}}$ is convex in $p_{i}^{\text{UE}}$ and the objective function $-\sum\limits_{i=1}^{N}q_{i}P_{i,\text{off}}$ is also convex. Therefore, the caching placement optimization  problem  is convex.
 \end{proof}

As a result, we can adopt a inter-point method to achieve the optimal solution\cite{convex}.
\section{Simulations}\label{sec:result}

\begin{table}
  \centering
  \caption{Default parameter setting}
  \label{table:parameter}
  \begin{tabular}{|c|c|}
    \hline
    Parameters& values  \\
    \hline
    D2D communication range:$R_{\text{UE}}$ & 15(m) \\
    helper transmission range:$R_{\text{H}}$ & 100 (m) \\
    the proportion of cache-enabled users:$\alpha$&0.5 \\
    the density of users:$\lambda_{\text{UE}}$& $5000/{\pi 500^2}$\\
    the density of helpers:$\lambda_{\text{H}}$&$50/{\pi 500^2}$ \\
    the cache capacity of users and helpers&$M_{\text{UE}}=2;M_{\text{H}}=8$ \\
    the size of content library:N& 30\\
   the skewness of the popularity:$\gamma$&1\\
    \hline
  \end{tabular}
\end{table}

\begin{table}
  \centering
  \begin{center}
   \caption{Different caching schemes}
\label{table:baseline}
  \begin{tabular}{|c|c|c|}
    \hline
    Schemes & caching schemes of users & caching schemes of helpers \\
    \hline

{popular cache}& $p_i^{\text{UE}}=1,i\in[1,M_{\text{UE}}] $ &  $p_i^{\text{H}}=1,i\in[1,M_{\text{H}}]$ \\
                            & $p_j^{\text{UE}}=0, j\in[M_{\text{UE}}+1,N]$ &  $p_j^{\text{H}}=0, j\in[M_{\text{H}}+1,N]$ \\
    \hline
     {even cache}& $\mathbf{P}^{\text{UE}}=M_{\text{UE}}/N$ & $\mathbf{P}^{\text{H}}=M_{\text{H}}/N$ \\
    \hline
    {Non-joint}&the solution of Problem (\ref{case2})&the solution of problem (\ref{case1})\\
    \hline
  \end{tabular}
  \end{center}
\end{table}

\begin{figure}[t]
\centering
\includegraphics[width=2.8in]{./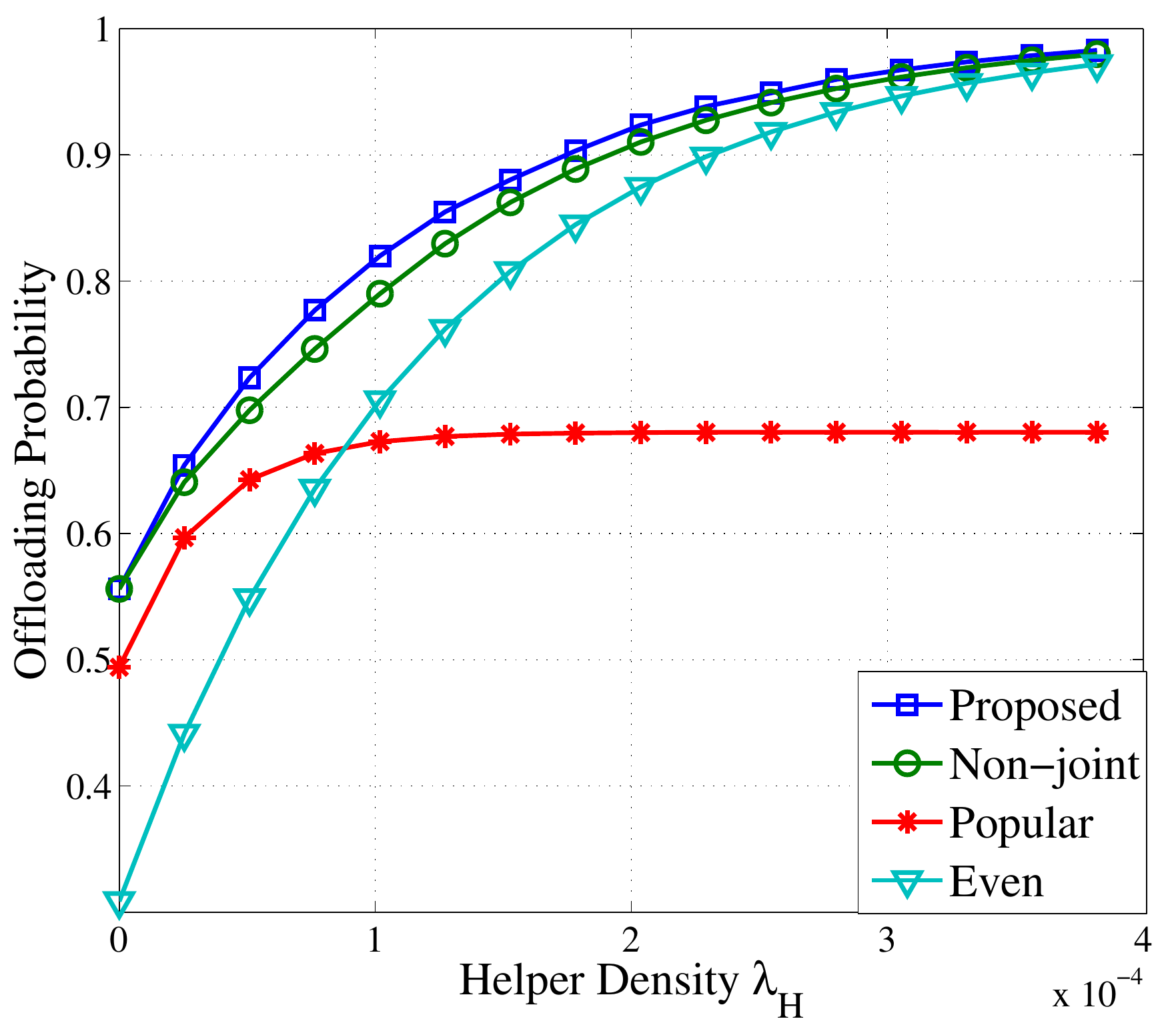}
\caption{The impact of $\lambda_{\text{H}}$ on the offloading probability}
\label{fig:lambda2}
\end{figure}

\begin{figure}[t]
\centering
\includegraphics[width=2.8in]{./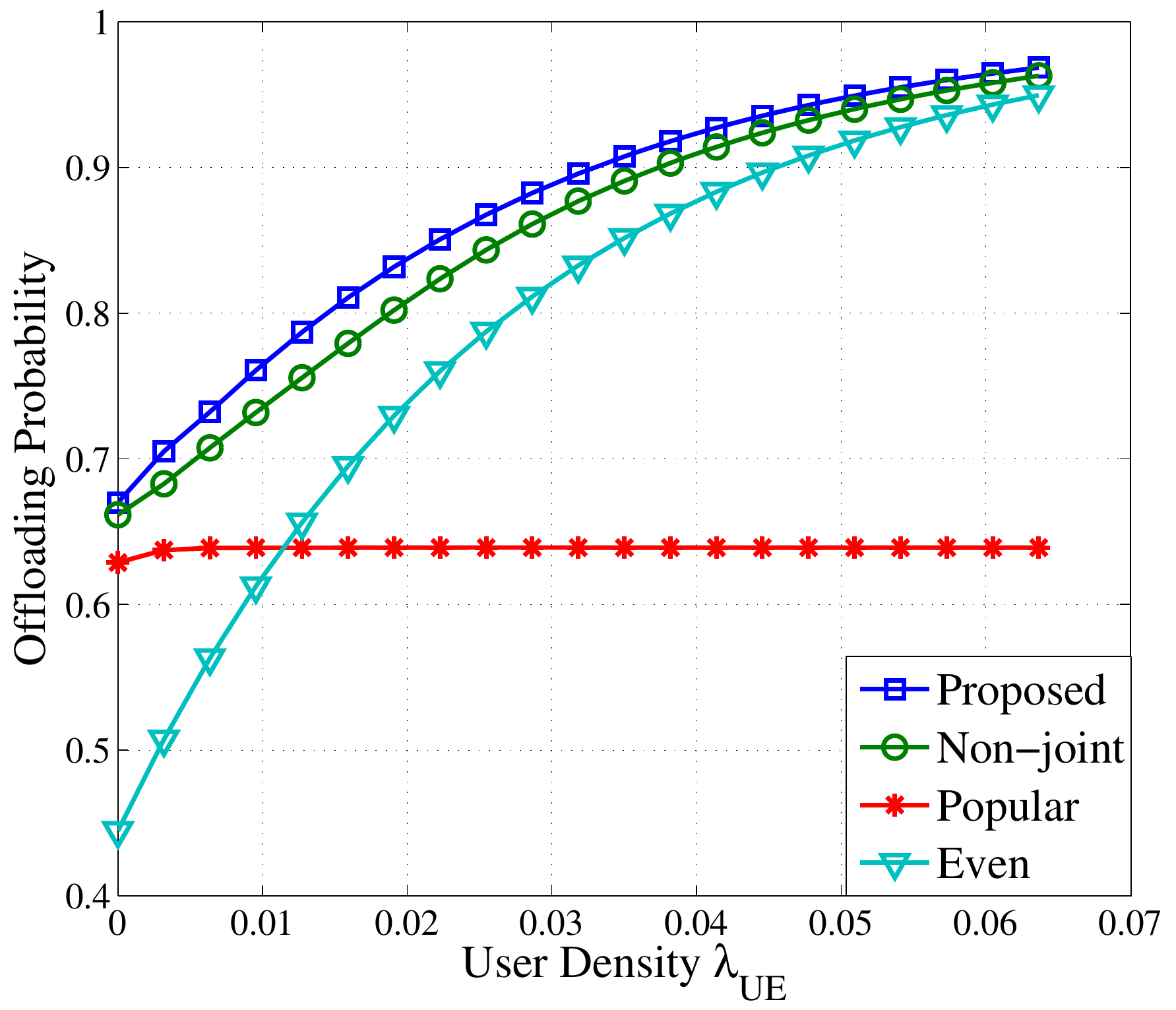}
\caption{The impact of $ \lambda_{\text{UE}}$ on the offloading probability}
\label{fig:lambda1}
\end{figure}

\begin{figure}[t]
\centering
\includegraphics[width=3.1in]{./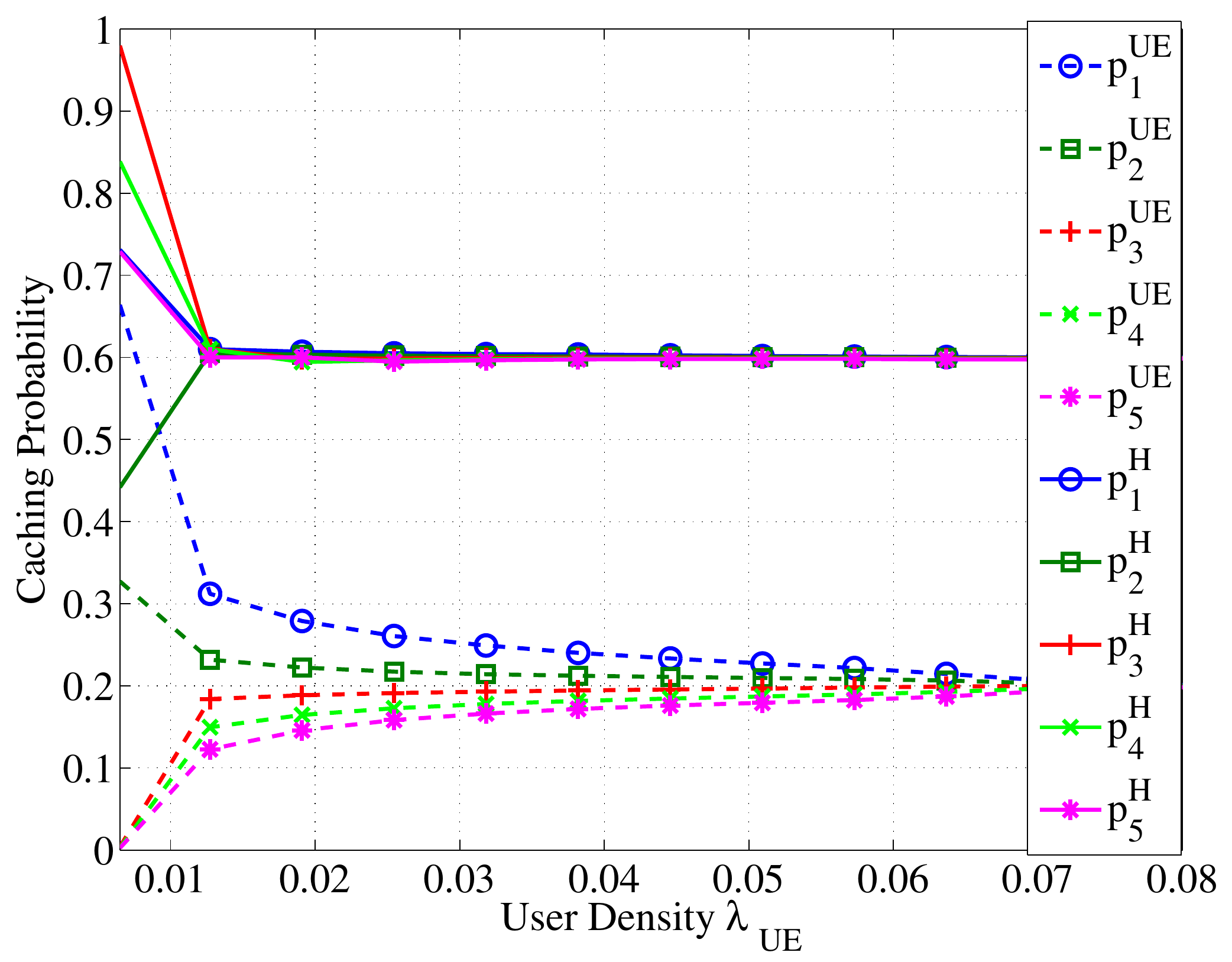}
\caption{the caching placement of the proposed placement}
\label{fig:distribution}
\end{figure}

\begin{figure}[t]
\centering
\includegraphics[width=2.8in]{./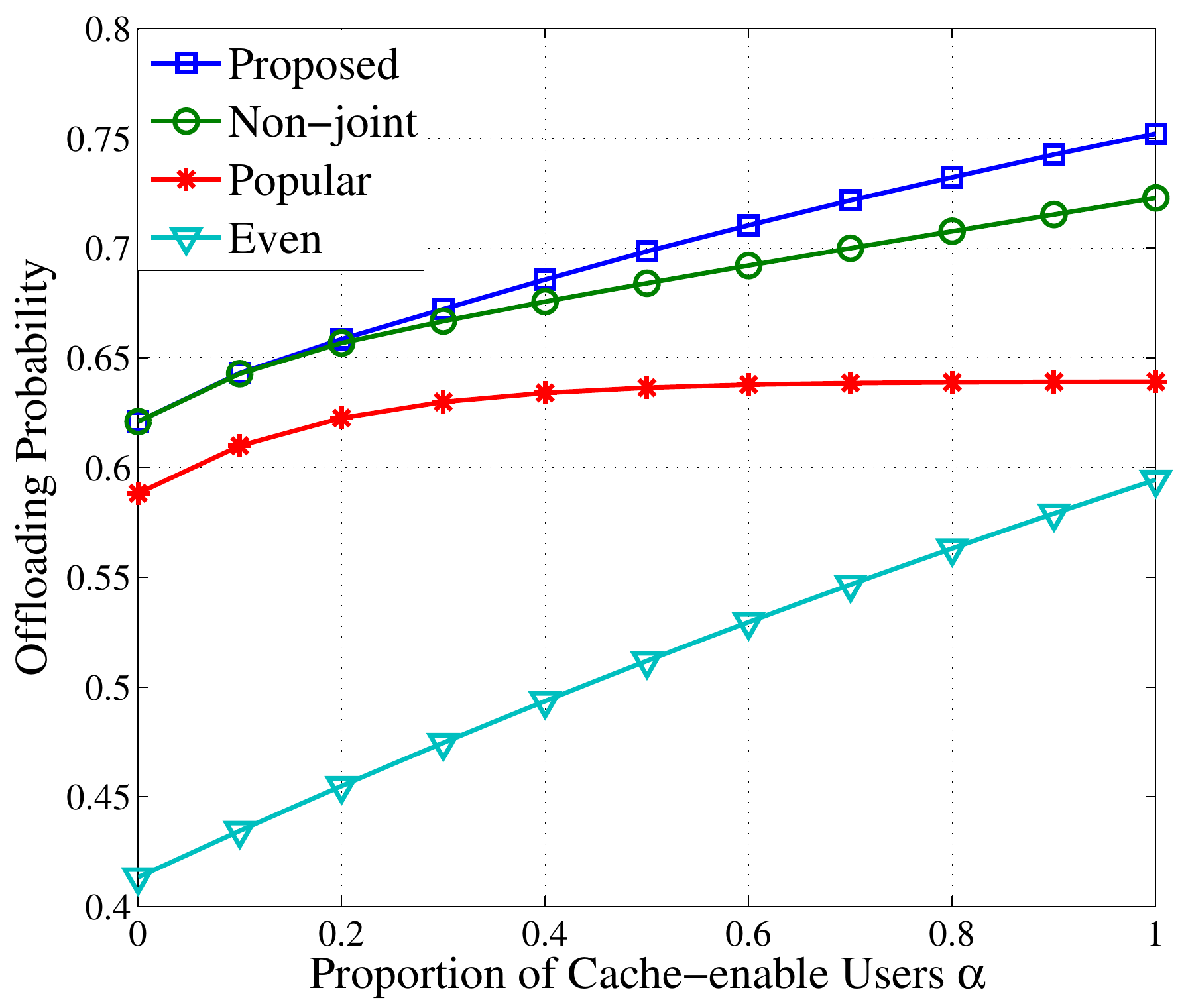}
\caption{The impact of $\alpha$ on the offloading probability}
\label{fig:beta}
\end{figure}

\begin{figure}[t]
\centering
\includegraphics[width=2.8in]{./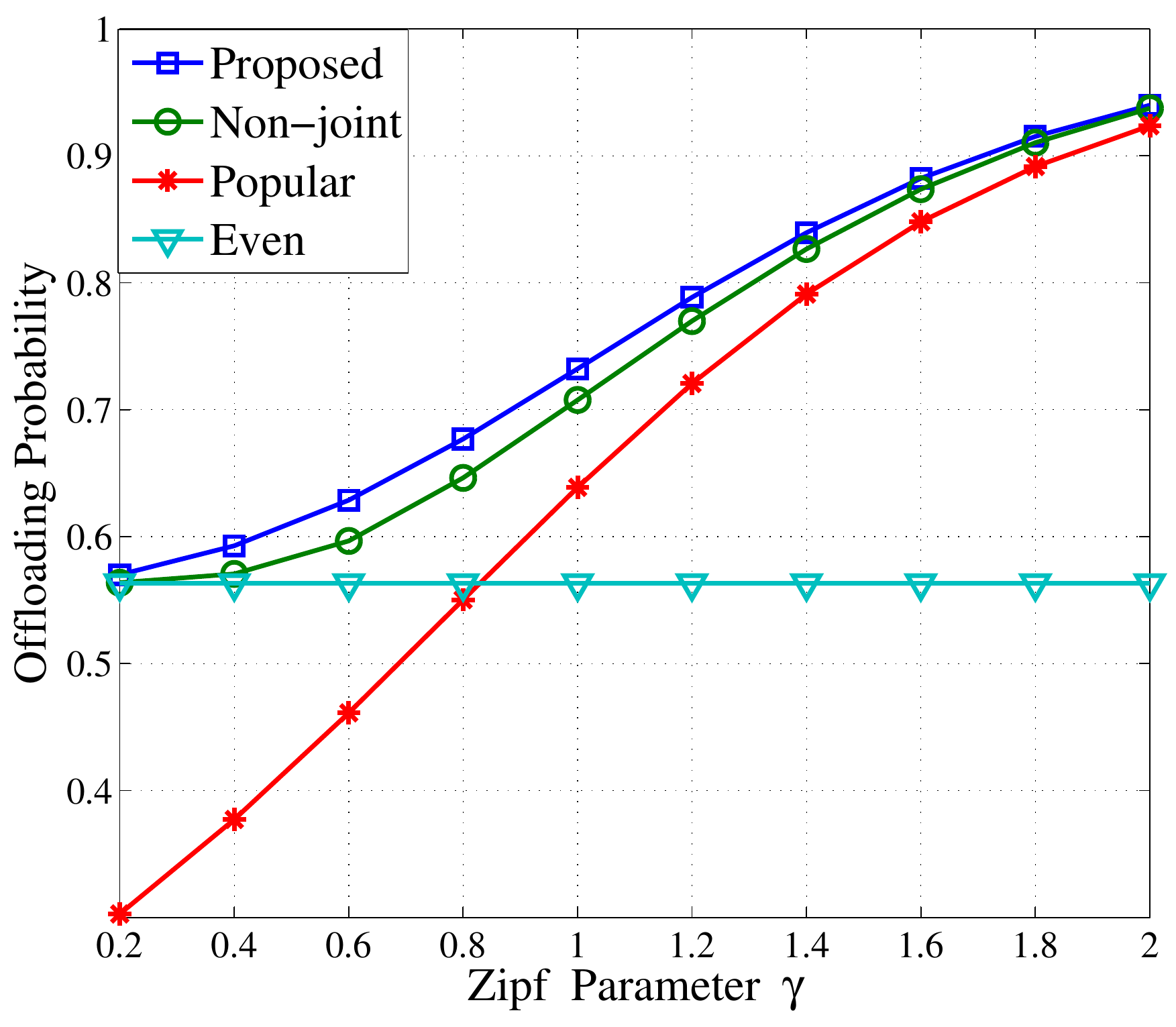}
\caption{The impact of $\gamma$ on the offloading probability}
\label{fig:alpha}
\end{figure}

\begin{figure}[t]
\centering
\includegraphics[width=2.8in]{./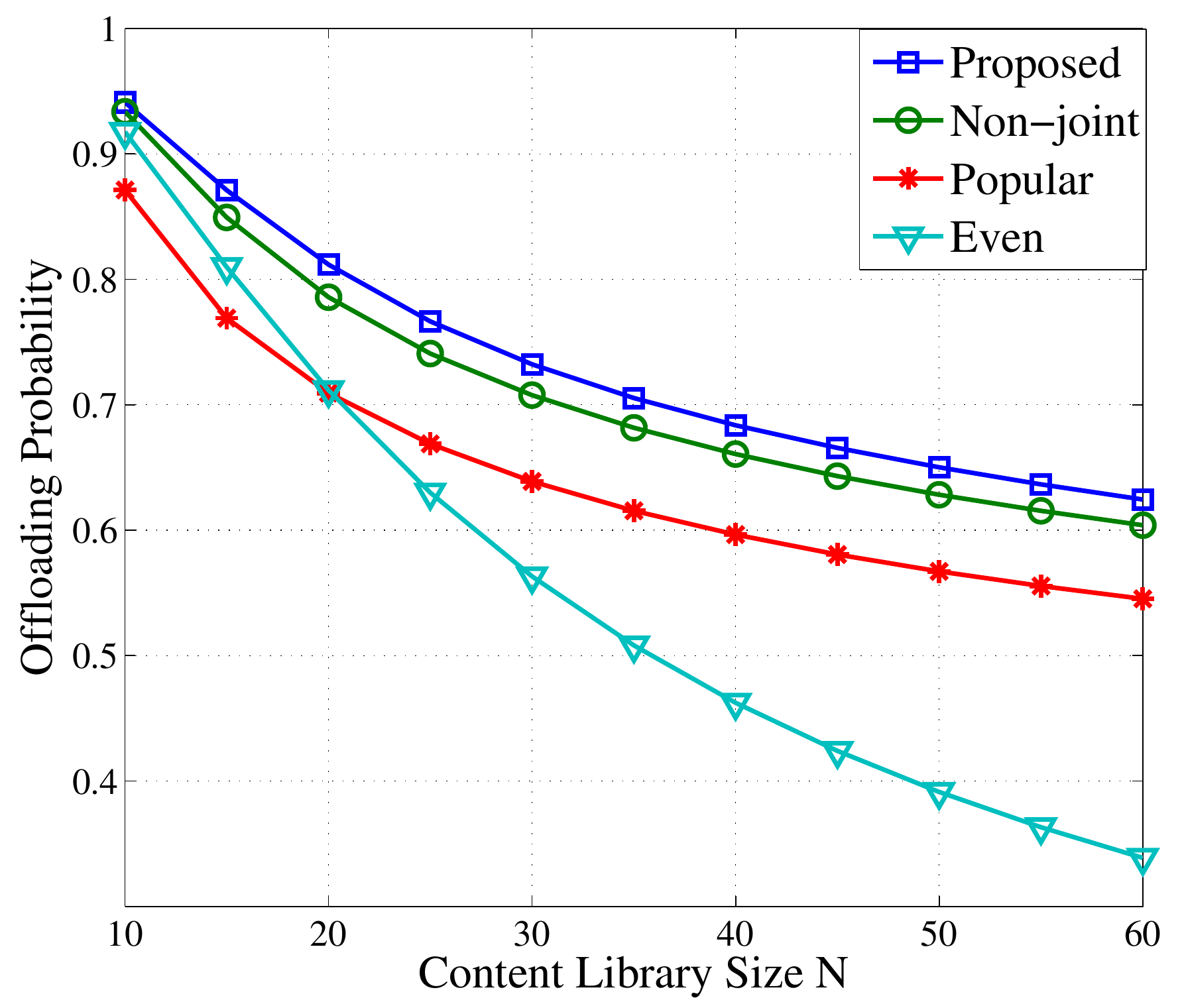}
\caption{The impact of N on the offloading probability}
\label{fig:M}
\end{figure}

In this section, we provide some numerical results to verify our analysis and compare the performance of the proposed caching placement with other baselines. Parameter setting and the three baselines are described in Table \ref{table:parameter} and Table \ref{table:baseline}. In particular, we combine the optimal solutions of the two one-tier caching cases as a baseline and named it non-joint optimal caching placement.

   Fig. \ref{fig:lambda2} shows that the offloading probability increases with helper density $\lambda_{\text{H}}$. The performance of the  proposed caching placement is better than other three baselines no matter how $\lambda_{\text{H}}$  changes. When $\lambda_{\text{H}}=0$, the performance of the proposed caching placement is equal to the non-joint one, because in this situation there are no helpers joining to offload traffic data. With the increasing of $\lambda_{\text{H}}$, the performance of the proposed caching placement becomes better than the non-joint one. While $\lambda_{\text{H}}$ is considerable large, non-joint caching placement approaches to the proposed scheme again. That is because the caching placement of non-joint schemes is also a optimal solution when there is only helper-tier, and the offloading is mostly consisted of helper-tier in this situation. From Fig. \ref{fig:lambda1} we can draw a similar conclusion about $\lambda_{\text{UE}}$.

Furthermore, from Fig. \ref{fig:lambda2} and Fig. \ref{fig:lambda1}, we can see that when both of $\lambda_{\text{H}}$ and $\lambda_{\text{UE}}$ are small, the performance of even cache scheme is the worst one. As $\lambda_{\text{H}}$ or $\lambda_{\text{UE}}$ increases, the performance of even cache scheme becomes better. When $\lambda_{\text{H}}$=$0.8\times10^{-4}$ and $\lambda_{\text{UE}}$=$1.2\times10^{-2}$, it exceeds over the popular cache scheme. That is because while there are few devices participating in the caching network, users and helpers need to cache popular contents to cope with the corresponding high request probability, thus the popular cache scheme performs well; When the resource of the caching network is rich i.e the node density is relatively high, the offloading probabilities for the most popular contents are easily satisfied, and the surplus storage can be used to cache other relatively less popular contents. So the even cache scheme performs better and the offloading probability of popular cache scheme no longer increases. When $\lambda_{\text{UE}}$ is considerable large, the performance of even cache scheme approaches to the optimal caching placement. In Fig. \ref{fig:distribution}, we demonstrate the proposed caching placement which is calculated by DC programming where $N=5,M_{\text{UE}}=1,M_{\text{H}}=3$. As $\lambda_{\text{UE}}$ increases, we can see that the optimal caching placement changes from a popular cache scheme to a even cache scheme which is consistent with our analysis.

Fig. \ref{fig:beta} illustrates the impact of $\alpha$ on offloading probability, where $\alpha$ stands for the proportion of cache-enabled users. When $\alpha=0$, the system reduces to a helper-tier caching system hence, the offloading probability of the proposed caching placement is equal to the non-joint one. While $\alpha$ is larger, there are more cache-enabled users joining in the caching system and therefore the offloading probability will increase. And we can see that while $\alpha \neq0$, the performance of the proposed placement is clearly better than the non-joint one.

 In this paper, $\gamma$ is denoted as the skewness of content popularity. While $\gamma$ is large, the user requests focus on the popular contents and the caching system may have large probabilities to cache the "right" contents. Therefore the offloading probability usually increases with $\gamma$ and we show it in Fig. \ref{fig:alpha}. The performance of popular cache scheme grows rapidly with increasing $\gamma$ while the performance of even cache scheme is not affected by $\gamma$, because it caches every content with a same probability.

Fig. \ref{fig:M} illustrates that the offloading probability decreases with $N$. To expand the size of content library $N$, in a sense, is similar to reduce the cache capacity $M$, thus the offloading probability will experience a decline accordingly. However we can notice that the performance of our proposed caching placement is still well. It demonstrates that when the system is applied into a multi-contents situation, the proposed caching placement can finely adjust the caching proportion of each content by a joint optimization and keep a good performance.

\section{conclusion}

In this paper, the  optimal caching  placement are proposed to  maximize  the  total offloading  probability for the D2D assisted  wireless caching network. Specifically, the caching placement problem for the two-tier caching network is formulated as a DC problem and be solved by the DC programming. In addition, the extreme analysis are provided for helper-tier (or user-tier) caching case in absence of the other tier. The caching placements for both cases are proved to be convex. Moreover, the classical water-filling method is adopted to solve the helper-tier caching case. Simulation results indicate the most popular contents are ought to be cached under relatively low node density, while contents are ought to be cached evenly under relatively high node density. And our proposed caching placement can always make a balance of that.
\bibliographystyle{IEEEtran}
\bibliography{paper}
\end{document}